\documentclass{birkjour}

\usepackage{amssymb,amsthm,amsmath,amsfonts}
\usepackage{tikz}
\usepackage{subcaption}
\usepackage{array}
\usepackage{pifont}
\usepackage{pdfpages}

\newtheorem{thm}{Theorem}[section]
\newtheorem{prop}[thm]{Proposition}
\newtheorem{cor}[thm]{Corollary}
\newtheorem{lem}[thm]{Lemma}

\newtheorem{defn}[thm]{Definition}

\DeclareMathOperator{\ddp}{dp}
\DeclareMathOperator{\DP}{DP}
\DeclareMathOperator{\N}{\mathcal{N}}

\newcolumntype{P}[1]{>{\centering\arraybackslash}p{#1}}

\begin{document}

%\includepdf{Response_to_reviewer.pdf}
%\setcounter{page}{1}

\title{On Distance Preserving and Sequentially \linebreak Distance Preserving Graphs}
\author{Jason P. Smith\textsuperscript{*}}
\address{Department of Physics and Mathematics,\\ Nottingham Trent University,\\ Nottingham, UK.}
\email{jason.smith@ntu.ac.uk}
\thanks{J.P. Smith was supported by the EPSRC Grant EP/M027147/1}
\thanks{\textsuperscript{*} Correspoding Author}

\author{Emad Zahedi}
\address{Department of Mathematics, Michigan State University, \\
              East Lansing, MI 48824-1027, U.S.A. \\
              Department of Computer Science and Engineering,	Michigan State University,\\
							East Lansing, MI 48824, U.S.A.}
\email{Zahediem@msu.edu}

\begin{abstract}
A graph $H$ is an \emph{isometric} subgraph of $G$ if $d_H(u,v)= d_G(u,v)$, for every pair~$u,v\in V(H)$. A graph is \emph{distance preserving} if it has an isometric subgraph of every possible order. A graph is \emph{sequentially distance preserving} if its vertices can be ordered such that deleting the first $i$ vertices results in an isometric subgraph, for all $i\ge1$. We give an equivalent condition to sequentially distance preserving based upon simplicial orderings. Using this condition, we prove that if a graph does not contain any induced cycles of length~$5$ or greater, then it is sequentially distance preserving and thus distance preserving. Next we consider the distance preserving property on graphs with a cut vertex. Finally, we define a family of non-distance preserving graphs constructed from cycles. 
\end{abstract}

\keywords{Distance Preserving, Isometric Subgraph, Sequentially Distance Preserving, Chordal, Cut Vertex,  Simplicial Vertex.}

\maketitle

\section{Introduction}\label{Introduction} 

The distance between two vertices in a graph plays an important role in many areas of graph theory. 
 Moreover, computing graph distances is integral to many real-world applications, especially in areas such as optimisation theory. 
 Computing distances between vertices in large graphs is extremely expensive, such as social networks with millions of vertices. 
 It is often desirable to know the distances between vertices in subgraphs of the original graph, yet this requires recomputing all the distances. 
 One solution to this problem is to consider subgraphs where the distances between all vertices is equal to their distance in the original graph, 
 such a subgraph is called \emph{isometric}.

In this framework all graphs are finite, non-empty, simple and connected, unless assumed otherwise. 
 A graph  $G$ is \emph{distance preserving}  if $G$ has an isometric subgraph of every possible order. 
 The notion of distance preserving graphs is a generalisation of \emph{distance-hereditary} graphs, 
 where a  graph is distance-hereditary if every connected induced subgraph is isometric. 
 Distance-hereditary graphs where first introduced by Howorka in~\cite{howorka1977characterization} and have since been studied in various papers, 
 see~\cite{bandelt1986distance, damiand2001simple, hammer1990completely}. 
 Distance-hereditary graphs have many nice properties, for example they are known to be perfect graphs \cite{golumbic2000clique}, and many NP-hard problems have
 polynomial time solutions on distance-hereditary graphs, such as finding dominating sets \cite{d1988distance}, 
 Hamiltonian cycles \cite{hsieh2002efficient}, and optimal communication trees \cite{esfahanian1993distance}.
 The set of distance-hereditary graphs is a subset of distance preserving graphs, and the less restrictive definition of distance preserving graphs allows for more complex structure.
 
Isometric subgraphs play an important role in metric graph theory, 
 which is the study of classes of graphs satisfying certain properties of classical metric geometries, see \cite{bandelt2008metric}.
 Metric graph theory, and thus isometric subgraphs, has important applications in a variety of areas, including geometric group theory,
 concurrency and learning theory, and combinatorial optimisation \cite{linial1995geometry}.

Distance preserving graphs where introduced in \cite{nussbaum2013clustering}, along with a clustering algorithm which partitions a graph into distance preserving subgraphs.
 The algorithm is applied to real-world social networks, where it is shown that clustering based on the distance preserving property is an effective way to extract communities from 
 large networks. This was further supported in \cite{Nus14} where it was shown that known communities within social networks form subgraphs 
 that are distance preserving, or almost distance preserving, 
 such as UK members of parliament within the Twitter network.
 However, limitations are encountered with this clustering approach, in part due to the lack of a theoretical understanding of distance preserving graphs.

Some of the first theoretical results appeared in \cite{NF2}, along with a variety of conjectures, including that almost all graphs are distance preserving.
 The key result of \cite{NF2} is that if an $n$-vertex graph $G$ has minimum degree  $\delta(G)\ge\frac{2n}{3}-1$, then $G$ is distance preserving,
 and it is conjectured that this bound can be lowered to $\delta(G)\ge\frac{n}{2}$, which was partially proved in \cite{kha20} for all graphs over a certain size threshold.
 In \cite{khalifeh2015distance,Zah19} results are given on the behaviour of the distance preserving properties under the Cartesian and lexicographic graph products and modular decomposition.
 A construction is given for distance preserving regular graphs of all possible orders and degrees of regularity in \cite{esfahanian2014constructing}.
 Algorithmic results were developed in \cite{bernstein2019distance,zahedi2017evolutionary} for finding isometric and almost isometric subgraphs.

One way to show that a graph $G$ is distance preserving is to give an ordering of the vertices~$v_1,\ldots,v_n$ of $G$, 
 such that removing $v_1,\ldots,v_i$ results in an isometric subgraph, for all $i\ge 1$. If such an ordering exists we say that~$G$ is 
 \emph{sequentially distance preserving}. Note that every distance-hereditary graph is sequentially distance preserving, 
 and every sequentially distance preserving graph is distance preserving. See Figure~\ref{fig:examples} for some examples of graphs satisfying these properties. 

\begin{figure}
\begin{subfigure}{0.49\textwidth}
\begin{center}
    \begin{tikzpicture}	[thick, scale=0.7]
    
        %\node[label=left:{$7$}] (1) at (0,-0.6){};
	    \node (0) at (0,0){};
		\node (a1) at (1,1){};
    	\node (a2) at (1,1.5){};
        \node (a3) at (1.5,1){};
		\node (b1) at (-1,1){};
    	\node (b2) at (-1,1.5){};
        \node (b3) at (-1.5,1){};
		\node (c1) at (1,-1){};
    	\node (c2) at (1,-1.5){};
        \node (c3) at (1.5,-1){};
		\node (d1) at (-1,-1){};
    	\node (d2) at (-1,-1.5){};
        \node (d3) at (-1.5,-1){};
        
        \draw[line width=.3pt] (0) to (a1);
        \draw[line width=.3pt] (0) to (a2);
        \draw[line width=.3pt] (0) to (a3);
        \draw[line width=.3pt] (0) to (b1);
        \draw[line width=.3pt] (0) to (b2);
        \draw[line width=.3pt] (0) to (b3);
        \draw[line width=.3pt] (0) to (c1);
        \draw[line width=.3pt] (0) to (c2);
        \draw[line width=.3pt] (0) to (c3);
        \draw[line width=.3pt] (0) to (d1);
        \draw[line width=.3pt] (0) to (d2);
        \draw[line width=.3pt] (0) to (d3);
        \draw[line width=.3pt] (a1) to (a2);
        \draw[line width=.3pt] (a2) to (a3);
        \draw[line width=.3pt] (a3) to (a1); 
        \draw[line width=.3pt] (b1) to (b2);
        \draw[line width=.3pt] (b2) to (b3);
        \draw[line width=.3pt] (b3) to (b1); 
        \draw[line width=.3pt] (c1) to (c2);
        \draw[line width=.3pt] (c2) to (c3);
        \draw[line width=.3pt] (c3) to (c1); 
        \draw[line width=.3pt] (d1) to (d2);
        \draw[line width=.3pt] (d2) to (d3);
        \draw[line width=.3pt] (d3) to (d1);    
                     
    	\shade[shading=ball, ball color=black] (0) circle (.1/0.7);
    	\shade[shading=ball, ball color=black] (a1) circle (.1/0.7);
	    \shade[shading=ball, ball color=black] (a2) circle (.1/0.7);
	    \shade[shading=ball, ball color=black] (a3) circle (.1/0.7);
		\shade[shading=ball, ball color=black] (b1) circle (.1/0.7);
    	\shade[shading=ball, ball color=black] (b2) circle (.1/0.7);
		\shade[shading=ball, ball color=black] (b3) circle (.1/0.7);
    	\shade[shading=ball, ball color=black] (c1) circle (.1/0.7);
	    \shade[shading=ball, ball color=black] (c2) circle (.1/0.7);
	    \shade[shading=ball, ball color=black] (c3) circle (.1/0.7);
		\shade[shading=ball, ball color=black] (d1) circle (.1/0.7);
    	\shade[shading=ball, ball color=black] (d2) circle (.1/0.7);
		\shade[shading=ball, ball color=black] (d3) circle (.1/0.7);
	\end{tikzpicture}
\end{center}\caption{Windmill Graph $W(4,4)$}
\end{subfigure}
\begin{subfigure}{0.49\textwidth}
\begin{center}
    \begin{tikzpicture}	[thick, scale=0.7]
    
        %\node[label=left:{$7$}] (1) at (0,-0.6){};
	    \node (11) at (1,1){};
	    \node (12) at (1,2){};
		\node (13) at (1,3){};
    	\node (14) at (1,4){};
	    \node (21) at (2,1){};
	    \node (22) at (2,2){};
		\node (23) at (2,3){};
    	\node (24) at (2,4){};
	    \node (31) at (3,1){};
	    \node (32) at (3,2){};
		\node (33) at (3,3){};
    	\node (34) at (3,4){};
	    \node (41) at (4,1){};
	    \node (42) at (4,2){};
		\node (43) at (4,3){};
    	\node (44) at (4,4){};

        \draw[line width=.3pt] (11) to (12);
        \draw[line width=.3pt] (12) to (13);
        \draw[line width=.3pt] (13) to (14);
        \draw[line width=.3pt] (21) to (22);
        \draw[line width=.3pt] (22) to (23);
        \draw[line width=.3pt] (23) to (24);
        \draw[line width=.3pt] (31) to (32);
        \draw[line width=.3pt] (32) to (33);
        \draw[line width=.3pt] (33) to (34);
        \draw[line width=.3pt] (41) to (42);
        \draw[line width=.3pt] (42) to (43);
        \draw[line width=.3pt] (43) to (44); 
        \draw[line width=.3pt] (11) to (21);
        \draw[line width=.3pt] (21) to (31);
        \draw[line width=.3pt] (31) to (41);                            
        \draw[line width=.3pt] (12) to (22);
        \draw[line width=.3pt] (22) to (32);
        \draw[line width=.3pt] (32) to (42);               
        \draw[line width=.3pt] (13) to (23);
        \draw[line width=.3pt] (23) to (33);
        \draw[line width=.3pt] (33) to (43);
        \draw[line width=.3pt] (14) to (24);
        \draw[line width=.3pt] (24) to (34);
        \draw[line width=.3pt] (34) to (44);                 

    	\shade[shading=ball, ball color=black] (11) circle (.1/0.7);
	    \shade[shading=ball, ball color=black] (12) circle (.1/0.7);
	    \shade[shading=ball, ball color=black] (13) circle (.1/0.7);
		\shade[shading=ball, ball color=black] (14) circle (.1/0.7);
    	\shade[shading=ball, ball color=black] (21) circle (.1/0.7);
		\shade[shading=ball, ball color=black] (22) circle (.1/0.7);
    	\shade[shading=ball, ball color=black] (23) circle (.1/0.7);
    	\shade[shading=ball, ball color=black] (24) circle (.1/0.7);
    	\shade[shading=ball, ball color=black] (31) circle (.1/0.7);
	    \shade[shading=ball, ball color=black] (32) circle (.1/0.7);
	    \shade[shading=ball, ball color=black] (33) circle (.1/0.7);
		\shade[shading=ball, ball color=black] (34) circle (.1/0.7);
    	\shade[shading=ball, ball color=black] (41) circle (.1/0.7);
		\shade[shading=ball, ball color=black] (42) circle (.1/0.7);
    	\shade[shading=ball, ball color=black] (43) circle (.1/0.7);
    	\shade[shading=ball, ball color=black] (44) circle (.1/0.7);
	\end{tikzpicture}
\end{center}\caption{Grid graph $G(4,4)$}
\end{subfigure}
\vskip 20pt
\begin{subfigure}{0.49\textwidth}
\begin{center}
    \begin{tikzpicture}	[thick, scale=0.7]
    
        %\node[label=left:{$7$}] (1) at (0,-0.6){};
	    \node (1) at (0:1cm){};
	    \node (2) at (51:1cm){};
		\node (3) at (102:1cm){};
    	\node (4) at (153:1cm){};
        \node (5) at (204:1cm){};
        \node (6) at (256:1cm){};
        \node (7) at (307:1cm){};

        \draw[line width=.3pt] (1) to (2) ;
        \draw[line width=.3pt] (2) to  (3);
        \draw[line width=.3pt] (3) to  (4);
        \draw[line width=.3pt] (4) to  (5);
        \draw[line width=.3pt] (5) to  (6);
        \draw[line width=.3pt] (6) to  (7);
        \draw[line width=.3pt] (7) to  (1);
        \draw[line width=.3pt] (4) to  (1);

    	\shade[shading=ball, ball color=black] (1) circle (.1/0.7);
	    \shade[shading=ball, ball color=black] (2) circle (.1/0.7);
	    \shade[shading=ball, ball color=black] (3) circle (.1/0.7);
		\shade[shading=ball, ball color=black] (4) circle (.1/0.7);
    	\shade[shading=ball, ball color=black] (5) circle (.1/0.7);
		\shade[shading=ball, ball color=black] (6) circle (.1/0.7);
    	\shade[shading=ball, ball color=black] (7) circle (.1/0.7);
	\end{tikzpicture}
\end{center}\caption{$7$-Cycle with a chord $C_7\cup e$}
\end{subfigure}
\begin{subfigure}{0.49\textwidth}
\begin{center}
    \begin{tikzpicture}	[thick, scale=0.7]
    
        %\node[label=left:{$7$}] (1) at (0,-0.6){};
	    \node (1) at (0:1cm){};
	    \node (2) at (70:1cm){};
		\node (3) at (140:1cm){};
    	\node (4) at (210:1cm){};
        \node (5) at (280:1cm){};

        \draw[line width=.3pt] (1) to (2) ;
        \draw[line width=.3pt] (2) to  (3);
        \draw[line width=.3pt] (3) to  (4);
        \draw[line width=.3pt] (4) to  (5);
        \draw[line width=.3pt] (5) to  (1);

    	\shade[shading=ball, ball color=black] (1) circle (.1/0.7);
	    \shade[shading=ball, ball color=black] (2) circle (.1/0.7);
	    \shade[shading=ball, ball color=black] (3) circle (.1/0.7);
		\shade[shading=ball, ball color=black] (4) circle (.1/0.7);
    	\shade[shading=ball, ball color=black] (5) circle (.1/0.7);
	\end{tikzpicture}
\end{center}\caption{Cycle Graph $C_5$}
\end{subfigure}
\vskip 20pt
\begin{subfigure}{\textwidth}
\begin{tabular}{c|cP{9em}c}
 & {\small Distance-Hereditary} & {\small Sequentially Distance Preserving} & {\small Distance Preserving}\\\hline
 $W(4,4)$    & \ding{51}  & \ding{51} & \ding{51} \\
 $G(4,4)$    & \ding{55}  & \ding{51} & \ding{51} \\
 $C_7\cup e$ & \ding{55}  & \ding{55} & \ding{51} \\
 $C_5$       & \ding{55}  & \ding{55} & \ding{55}   \\
\end{tabular}\caption{Distance properties satisfied by the above graphs}
\end{subfigure}
\caption{}\label{fig:examples}
\end{figure}

Few results exist for sequentially distance preserving graphs. The concept was first introduced in \cite{chepoi1998distance} under the name \emph{distance-preserving orderings}, 
 and considered in relation to domination elimination orderings, which were introduced in \cite{dahihaus1995domination}. 
 In \cite{chepoi1998distance} it is shown that pseudo-modular and house-free weakly modular graphs are sequentially distance preserving. 
 The lexicographic product of two sequentially distance preserving graphs is shown to be sequentially distance preserving in \cite{khalifeh2015distance}. 
 It was shown in \cite{coudert18} that determining if a sequentially distance preserving ordering exists is NP-complete.

A graph is \emph{$k$-chordal} if the largest induced cycle is of length $k$. It was shown in~\cite{Zah15} that $3$-chordal graphs are sequentially distance preserving. 
 This is proved using the fact that all $3$-chordal graphs have a certain type of ordering of the vertices called a simplicial ordering.  
 A generalisation of simplicial orderings is introduced in~\cite{Kri13}. 
 We apply this generalisation in Section~\ref{sec:4chordal} to show that $4$-chordal graphs are sequentially distance preserving.

A connected graph has a \emph{cut vertex} $x$ if removing $x$ disconnects the graph.  
 In Section~\ref{Separable graphs} we consider graphs of the form $G\cup H$, where $G$ and $H$ are non-trivial graphs 
 with $E(G)\not=\emptyset$ and $E(H)\not=\emptyset$ and have exactly one common vertex $x$, so $x$ is a cut vertex. 
 We denote such graphs by $G+_{x}H$. We characterise the distance preserving property in $G+_{x} H$ in terms of $G$ and $H$, 
 which reduces the complexity of testing if such graphs are distance preserving.
 Finally, in Section~\ref{$C_{k,l}$} we study the class of non-distance preserving graphs, presenting a result on how to add vertices to cycle graphs 
 whilst maintaining the non-distance preserving property.

%KNOWN RESULTS:
%\begin{enumerate}
%\item \cite{kha20}: partially prove a conjecture of \cite{NF2} that every graph with minimum degree $\delta(G)>\frac{n}{2}$ is dp, they show it is true if for all graphs over a size threshold, still open in general.
%\item \cite{bernstein2019distance} an algorithm is given for edge contractions to preserve pair-wise distances, within some tolerance
%\item Called distance-preserving elimination orderings in \cite{coudert18} where it is shown that it is NP-complete to determine if such an ordering exists,
%\item \cite{zahedi2017evolutionary} an algorithm is given for finding isometric subgraphs
%\item \cite{Zah19} give conditions for a graph to be dp based on a modular decomposition of the graph.
%\item A good overview of dp and sdp are the theses \cite{Nus14,Zahedi2017distance}
%\item \cite{khalifeh2015distance} gave a necessary and sufficient condition for the lexicographic product of two graphs to be dp
%\item \cite{esfahanian2014constructing} constructed regular distance-preserving graphs of all possible orders and degrees of regularity
%\item \cite{NF2} first look with a few preliminary results and many conjectures 
%\item \cite{nussbaum2013clustering} introduce a clustering algorithm based on clustering into dp subgraphs, and demonstrate its application to social networks.
%\end{enumerate}

%%%%%%%%%%%%%%%%%%%%%%%%%%%%%%%%%%%%%%%%%%%%%%%%%%%%%%%%%%%%%%%%%%%%%%%%%%%%%%%%%%%%%%%%%%

\section{Background}\label{Background}

In this section we recall some necessary graph theory concepts. For any definitions and notation not given here, and a general overview of graph theory, we refer the reader to \cite{bondy1976graph}. Let $G$ be a graph with vertex set $V(G)$ and edge set~$E(G)$. For ease of notation, we let $|G|$ be the number of vertices of $G$. A {\em path} in $G$ is a sequence of distinct vertices $v_0, \dots, v_k$ such that $v_iv_{i+1}\in E(G)$, for all $i=0,\dots, k-1$.  The length of a path is $k$, the number of edges. A path~$P$ is \emph{chordless} if there is no edge of $G$ between any non-consecutive pair of vertices of $P$. The \emph{interior} of a path $P$ is obtained by removing the end points $v_0,v_k$ from $P$. The {\em distance} between two vertices $u,v$ in $G$, denoted~$d_G(u,v)$, is the minimum length of a path between these vertices. If $G$ is clear from  context, we will use~$d(u,v)$, instead of~$d_G(u,v)$. A path from $u$ to $v$ with length $d_G(u,v)$ is called a {\em $u$--$v$  geodesic} path.

An induced subgraph $H$ of $G$ is called an {\em isometric} subgraph, denoted~$H\le G$, if $d_H(a,b)=d_G(a,b)$, for every pair of vertices $a,b \in V(H)$. We say that $G$ is {\em distance preserving}, for which we use the abbreviation dp, if there is an $i$-vertex isometric subgraph, for every $1 \leq i \leq |G|$. 
Given a set $A\subseteq V(G)$, let $G[A]$ be the graph induced on the set $A$ and $G- A:=G[V(G)\setminus A]$. We say that $G$ is {\em sequentially distance preserving}, which we abbreviate to sdp, if there is an ordering $v_1, \ldots , v_n$ of~$V(G)$ such that deleting the first $i$ vertices results in an isometric subgraph for all~$i\ge 1$. 

 The cycle graph $C_k$ is the graph with vertices $v_1,\ldots,v_k$ and the edge set $E(C_k):=~\{v_iv_j\,:\,|i-j|=~1\}\cup\{v_1v_k\}$. If $G$ contains $C_k$ as a subgraph we say it contains a cycle of length $k$ or a \emph{$k$-cycle}. If $G$ is connected, then a vertex $v\in V(G)$ is called a \emph{cut vertex} if $G-\{v\}$ is not connected.  The graph $G^\ell$ is the graph whose vertices are those of $G$ and there is an edge between any two vertices~$u,v \in G$ with $d_G(u,v) \le \ell$.  The set of vertices adjacent to $v\in V(G)$ is called its {\em open neighbourhood} and is denoted $\N_G(v)$.  The {\em closed neighbourhood} of $v$ is~$\N_G[v] = \N_G(v) \cup \{v\}$. A \emph{clique} in $G$ is an induced subgraph that has an edge between every pair of its vertices.

%%%%%%%%%%%%%%%%%%%%%%%%%%%%%%%%%%%%%%%%%%%%%%%%%%%%%%%%%%%%%%%%%%%%%%%%%%%%%%%%%%%%%%%%%%

\section{All $4$-chordal graphs are distance preserving}\label{sec:4chordal}

It was shown in \cite{Zah15} that $3$-chordal graphs, often just called chordal graphs, are sequentially distance preserving. This is shown using the well known property that all chordal graphs have a simplicial ordering. This property is generalised to $k$-chordal graphs in~\cite{Kri13}, using the notion of a~$k$-simplicial ordering.

\begin{defn}
A vertex $v$ of a graph $G$ is \emph{weakly $k$-simplicial} if $\N_G(v)$ induces a clique in $(G-\{v\})^{k-2}$. Furthermore, $v$ is \emph{$k$-simplicial} if it is weakly $k$-simplicial and for each non-adjacent pair $x,y$ in $\N_G(v)$, every chordless~$x,y$-path whose interior is entirely in~$G- \N_G[v]$ has at most $k-2$ edges. A vertex ordering $v_1,\ldots,v_{n}$ of $G$ is a \emph{(weakly) $k$-simplicial ordering} if $v_i$ is (weakly) $k$-simplicial in~$G[\{v_i,\ldots,v_{n}\}]$.
\end{defn}

We use this generalised simplicial ordering to prove the conjecture in~\cite{NF2} that all $4$-chordal graphs are distance preserving. In order to do this we need the main result from~\cite{Kri13}, which we present next. Note that there is a third equivalent statement in the original theorem which we omit here as we do not require it for our results.

\begin{thm}\label{thm:kri}~\cite[Theorem 1]{Kri13}
Consider a graph $G$ and integer $k\ge 3$. The graph $G$ is~$k$-chordal if and only if $G$ has a $k$-simplicial ordering.
\end{thm}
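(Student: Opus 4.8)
The theorem is an equivalence, and I would establish the two implications separately; the substance is in the ``only if'' direction. For the ``if'' direction, suppose $G$ has a $k$-simplicial ordering $v_1,\dots,v_n$ but is not $k$-chordal, so it contains an induced cycle $C$ of length $\ell\ge k+1$. Let $v_i$ be the vertex of $C$ of smallest index; then $C$ is still an induced cycle of $G_i:=G[\{v_i,\dots,v_n\}]$, in which $v_i$ is $k$-simplicial. Writing $x,y$ for the two neighbours of $v_i$ on $C$, we have $x\neq y$ and, since $\ell\ge 4$, $x$ and $y$ are non-adjacent in $G_i$; moreover, since $C$ is induced, $C-v_i$ is a chordless $x$--$y$ path whose interior (the vertices of $C$ other than $v_i,x,y$) lies in $G_i-\N_{G_i}[v_i]$. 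This path has $\ell-1\ge k>k-2$ edges, contradicting the $k$-simpliciality of $v_i$. Hence $G$ is $k$-chordal.

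For the ``only if'' direction I would build the ordering greedily, repeatedly deleting a $k$-simplicial vertex; this is legitimate because $k$-chordality is inherited by induced subgraphs (an induced cycle of an induced subgraph of $G$ is an induced cycle of $G$), so it is enough to prove that every non-empty $k$-chordal graph has a $k$-simplicial vertex. The key simplification is that \emph{in a $k$-chordal graph the second condition in the definition of $k$-simpliciality holds for every vertex}: if $v$ is a vertex, $x,y\in\N_G(v)$ are non-adjacent, and $Q$ is a chordless $x$--$y$ path with interior in $G-\N_G[v]$, then the subgraph induced on $V(Q)\cup\{v\}$ is exactly the cycle formed by $Q$ together with the edges $vx,vy$ (the interior of $Q$ avoids $\N_G(v)$, and $x\not\sim y$), so this induced cycle has length $|E(Q)|+2\le k$, i.e.\ $Q$ has at most $k-2$ edges. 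Thus, for vertices of a $k$-chordal graph, ``$k$-simplicial'' collapses to ``weakly $k$-simplicial'', and the task reduces to finding a vertex $v$ such that every pair in $\N_G(v)$ is at distance at most $k-2$ in $G-v$. As a sample instance, any vertex of degree at most $2$ qualifies: if its two neighbours $x,y$ had $d_{G-v}(x,y)\ge k-1$, then a shortest (hence chordless) $x$--$y$ path in $G-v$ together with $vx,vy$ would be an induced cycle of length at least $k+1$.

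For the general case I would induct on $|V(G)|$ using minimal vertex separators, following the classical argument that chordal graphs have simplicial vertices. If $G$ is complete, every vertex is $k$-simplicial. Otherwise fix a minimal separator $S$ and a component $D$ of $G-S$; then $G[D\cup S]$ is a smaller $k$-chordal graph and, by induction, contains a weakly $k$-simplicial vertex. If such a vertex $v$ can be found inside $D$, then $\N_G(v)=\N_{G[D\cup S]}(v)$, and since $G[D\cup S]-v$ is an induced subgraph of $G-v$, the pairwise distances among the neighbours of $v$ can only decrease on passing to $G$; hence $v$ is weakly $k$-simplicial in $G$ and the induction closes. \textbf{The main obstacle is precisely this: ensuring that a weakly $k$-simplicial vertex can be chosen in the part $D$ rather than only in the separator $S$.} For chordal graphs this is forced because $S$ is a clique, so of the two non-adjacent simplicial vertices provided by a strengthened induction hypothesis at most one can lie in $S$. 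When $k\ge 4$ the separator need not be a clique, and I expect to need either a more refined induction hypothesis — one that controls on which side of a separator a weakly $k$-simplicial vertex is produced — or a separate structural lemma showing that if no vertex of $D$ were weakly $k$-simplicial then a pair of far-apart neighbours of a suitable vertex, joined to short $S$-to-$S$ paths routed through $D$ (available from the minimality of $S$ and the $k$-chordality of $G$), could be assembled into an induced cycle longer than $k$. Getting that assembly and the accompanying length estimate right is the step I would expect to be the most delicate.
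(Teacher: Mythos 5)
First, a point of comparison: the paper does not prove this statement at all — it is quoted verbatim from \cite[Theorem 1]{Kri13} and used as a black box — so there is no in-paper argument to measure yours against. Judged on its own terms, your attempt is half a proof. The ``if'' direction is correct: taking the vertex $v_i$ of an induced cycle $C$ of length $\ell\ge k+1$ that is earliest in the ordering, the path $C-v_i$ is a chordless path between the two non-adjacent cycle-neighbours of $v_i$ whose interior avoids $\N_{G_i}[v_i]$, and it is too long. (Minor slip: that path has $\ell-2$ edges, not $\ell-1$; since $\ell-2\ge k-1>k-2$ the contradiction stands.) Your observation that in a $k$-chordal graph the second clause of $k$-simpliciality is automatic — because a chordless $x$--$y$ path with interior outside $\N_G[v]$ closes up with $v$ into an induced cycle of length $|E(Q)|+2\le k$ — is also correct and is a genuinely useful reduction: for the ``only if'' direction one then only needs a \emph{weakly} $k$-simplicial vertex in every non-empty $k$-chordal graph, since $k$-chordality passes to induced subgraphs.

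But that existence statement is the entire content of the hard direction, and you do not prove it. You yourself name the obstruction: in the Dirac-style induction on a minimal separator $S$ with component $D$, nothing guarantees that the inductively produced weakly $k$-simplicial vertex of $G[D\cup S]$ lies in $D$ rather than in $S$, and for $k\ge4$ the separator need not be a clique, so the classical ``of two non-adjacent simplicial vertices at most one lies in $S$'' argument is unavailable. The closing paragraph (``I expect to need either a more refined induction hypothesis\dots'') is a research plan, not a proof; the structural lemma you would need — assembling short $S$-to-$S$ paths routed through the components of $G-S$ into an induced cycle of length greater than $k$ — is exactly where the real work of \cite{Kri13} lives. Without it the ``only if'' direction is unproven, and with it falls the application in Theorem~\ref{thm:sdp}. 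To complete the argument you must either carry out that lemma in full or do as the paper does and cite the result.
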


Before proving the main result of this section we present the following lemma, which is a generalisation of Lemma $3.1$ of~\cite{Zah15}. 

\begin{lem}\label{lem:w4simp}
Consider a graph $G$ and vertex $v\in V(G)$. The graph $G-{v}$ is isometric if and only if $v$ is weakly $4$-simplicial.
\begin{proof}
Suppose $v$ is weakly $4$-simplicial. This implies that $\N_G(v)$ induces a clique in $(G-{v})^2$, that is, any pair $x,y\in \N_G(v)$ have a distance of at most~$2$ in $G-{v}$. Consider any path $P$ which contains $v$ in its interior. There must be a subpath $x-v-y$ of $P$, where $x,y\in \N_G(v)$. Because $v$ is weakly~$4$-simplicial we know that $x$ and $y$ are either neighbours or have a common neighbour~$z\not=v$. Therefore, we can either remove $v$ or replace it with $z$ to get a path that is at least as short as $P$ lying in $G-v$. It follows that $G-{v}$ is isometric.

Suppose $G-{v}$ is isometric. Consider any pair $u,w\in \N_G(v)$, then we know that $d_G(u,w)\le 2$ which implies $d_{G-{v}}(u,w)\le 2$. Therefore, $\N_G(v)$ induces a clique in $(G-{v})^2$, so $v$ is weakly $4$-simplicial. 
\end{proof}
\end{lem}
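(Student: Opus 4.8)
The plan is to prove both directions of the biconditional, treating the claim that $G - v$ is isometric as equivalent to the statement that every pair $x,y \in \N_G(v)$ satisfies $d_{G-v}(x,y) \le 2$, which is exactly the condition that $\N_G(v)$ induces a clique in $(G-v)^2$, i.e. that $v$ is weakly $4$-simplicial.

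For the reverse direction (weakly $4$-simplicial implies $G-v$ isometric), I would start from an arbitrary pair of vertices $a,b \in V(G-v)$ and take a shortest $a$--$b$ path $P$ in $G$, aiming to produce an $a$--$b$ path of the same length avoiding $v$. If $P$ already avoids $v$ we are done, so assume $v$ lies on $P$, necessarily in its interior since $a,b \ne v$. Then $P$ contains a subpath $x - v - y$ with $x,y \in \N_G(v)$. The weakly $4$-simplicial hypothesis gives $d_{G-v}(x,y)\le 2$, so $x$ and $y$ are either adjacent or share a common neighbour $z \ne v$ in $G-v$; in either case I can splice out the vertex $v$ and replace the length-$2$ detour $x-v-y$ by a path of length at most $2$ through $G-v$, yielding a walk from $a$ to $b$ in $G-v$ no longer than $P$. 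Since $d_G(a,b) = |P|$ and any $a$--$b$ walk in $G-v$ has length at least $d_G(a,b)$, the rerouted path has length exactly $d_G(a,b)$, proving $d_{G-v}(a,b) = d_G(a,b)$. As this holds for all pairs, $G - v$ is isometric.

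For the forward direction (isometric implies weakly $4$-simplicial), I would take any $u,w \in \N_G(v)$. Since both are adjacent to $v$, the path $u - v - w$ shows $d_G(u,w) \le 2$. Because $G-v$ is isometric and $u,w \in V(G-v)$, we have $d_{G-v}(u,w) = d_G(u,w) \le 2$, so $u,w$ are joined by an edge in $(G-v)^2$. Hence $\N_G(v)$ induces a clique in $(G-v)^2$, which is precisely the definition of $v$ being weakly $4$-simplicial.

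The main subtlety to handle carefully is the splicing argument in the reverse direction: I must confirm that the rerouted object is genuinely a path (or at least a walk whose length bounds the true distance from above) and does not reintroduce $v$ or create length inflation. In fact it is cleanest to argue at the level of \emph{walks} rather than paths, since a shortest walk avoiding $v$ has the same length as a shortest path avoiding $v$, and the replacement of $x-v-y$ by an $x$--$y$ segment of length at most $2$ in $G-v$ never increases total length. One should also note that the detour replacement may need to be applied once for the single occurrence of $v$ on the geodesic $P$ (a geodesic visits each vertex at most once), so no iteration or bookkeeping over multiple copies of $v$ is required, keeping the argument short.
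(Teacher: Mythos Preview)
Your proposal is correct and follows essentially the same argument as the paper's own proof: both directions use the geodesic-splicing idea (replace the subpath $x{-}v{-}y$ by a detour of length at most $2$ in $G-v$) for one implication and the trivial bound $d_G(u,w)\le 2$ for $u,w\in\N_G(v)$ together with isometry for the other. Your version is in fact slightly more careful than the paper's, explicitly noting the walk-versus-path issue and that $v$ occurs at most once on a geodesic.
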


The following proposition is an immediate result of Lemma~\ref{lem:w4simp}. 

\begin{prop}\label{prop:sdp}
A graph is sdp if and only if it admits a weakly $4$-simplicial ordering.
\begin{proof}
Lemma \ref{lem:w4simp} implies that a vertex ordering is a weakly $4$-simplicial ordering if and only if it is an sdp ordering.
\end{proof}
\end{prop}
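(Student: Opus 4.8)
The plan is to unpack both definitions and match them up stage by stage, using Lemma~\ref{lem:w4simp} as the engine. Fix an ordering $v_1,\ldots,v_n$ of $V(G)$ and write $G_i := G[v_i,\ldots,v_n]$, so that $G_1 = G$ and $G_{i+1} = G_i - v_i$; deleting the first $i$ vertices produces exactly $G_{i+1}$. I would first record the elementary observation that the isometric relation is transitive, and in fact that if $A \subseteq B \subseteq C$ with $A \le C$ and $B \le C$, then $A \le B$: for any $a,b \in A$ one has $d_A(a,b) = d_C(a,b) = d_B(a,b)$. This lets me move freely between ``isometric in $G$'' and ``isometric in the immediately preceding graph of the sequence.''

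For the forward direction, suppose $v_1,\ldots,v_n$ is a weakly $4$-simplicial ordering, i.e.\ $v_i$ is weakly $4$-simplicial in $G_i$ for every $i$. Applying Lemma~\ref{lem:w4simp} inside the graph $G_i$ to the vertex $v_i$ shows that $G_{i+1} = G_i - v_i$ is isometric in $G_i$. Iterating down from $i=1$ and using transitivity gives $G_{i+1} \le G_1 = G$ for all $i$, so deleting the first $i$ vertices always leaves an isometric subgraph; hence $G$ is sdp with this ordering.

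For the converse, suppose $G$ is sdp via $v_1,\ldots,v_n$, so $G_{i+1} \le G$ for all $i \ge 0$ (the case $i=0$ being $G_1 = G \le G$ trivially). Fix $i$. Since $G_{i+1} \subseteq G_i \subseteq G$ and both $G_i \le G$ and $G_{i+1} \le G$, the observation above yields $G_{i+1} = G_i - v_i \le G_i$. Lemma~\ref{lem:w4simp}, applied inside $G_i$, then gives that $v_i$ is weakly $4$-simplicial in $G_i$. As $i$ was arbitrary, $v_1,\ldots,v_n$ is a weakly $4$-simplicial ordering, completing the equivalence.

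The content here is genuinely light, so the only real point to get right — and hence the ``main obstacle'' — is the chaining step: Lemma~\ref{lem:w4simp} speaks only about deleting a single vertex, and one must verify that the sdp condition (each truncation isometric in the \emph{full} graph $G$) is the same as requiring each truncation to be isometric in its \emph{immediate predecessor}. That equivalence is precisely the transitivity observation above, and it is what makes the short argument rigorous.
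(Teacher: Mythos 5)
Your proof is correct and takes the same route as the paper, which simply asserts that Lemma~\ref{lem:w4simp} identifies weakly $4$-simplicial orderings with sdp orderings. You usefully make explicit the one point the paper glosses over — that ``each truncation isometric in $G$'' and ``each truncation isometric in its immediate predecessor'' coincide, via the observation that $A\subseteq B\subseteq C$ with $A\le C$ and $B\le C$ forces $A\le B$, together with transitivity — but this is a detail of the same argument, not a different approach.
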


Now we have all we need to prove Conjecture $5.2$ of~\cite{NF2}:
\begin{thm}\label{thm:sdp}
Any $4$-chordal graph is sdp, and thus dp.
\begin{proof}
Applying Theorem~\ref{thm:kri} with $k=4$  shows that for any~$4$-chordal graph there is a $4$-simplicial ordering of the vertices. Moreover, Proposition~\ref{prop:sdp} implies this ordering is an sdp ordering.
\end{proof}
\end{thm}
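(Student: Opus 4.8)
The plan is to assemble the result from the three ingredients already in place: Theorem~\ref{thm:kri} (every $4$-chordal graph admits a $4$-simplicial vertex ordering), Lemma~\ref{lem:w4simp} (deleting a single vertex $v$ yields an isometric subgraph precisely when $v$ is weakly $4$-simplicial), and Proposition~\ref{prop:sdp} (a graph is sdp exactly when it has a weakly $4$-simplicial ordering). The logical skeleton is: a $4$-simplicial ordering is in particular a weakly $4$-simplicial ordering, and a weakly $4$-simplicial ordering is an sdp ordering; hence $4$-chordal $\Rightarrow$ sdp $\Rightarrow$ dp.

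First I would invoke Theorem~\ref{thm:kri} with $k=4$ to obtain a $4$-simplicial ordering $v_1,\dots,v_n$ of $V(G)$. Next I would observe that every $k$-simplicial vertex is by definition weakly $k$-simplicial, and that this implication is inherited by induced subgraphs along the ordering: since $v_i$ is $4$-simplicial in $G[v_i,\dots,v_n]$ it is weakly $4$-simplicial there, so the same ordering is a weakly $4$-simplicial ordering. Then Proposition~\ref{prop:sdp} (equivalently, applying Lemma~\ref{lem:w4simp} inductively to the graphs $G[v_i,\dots,v_n]$) shows this ordering witnesses the sdp property. Finally, the remark from the introduction that any sdp graph is dp --- one simply reads off isometric subgraphs of every order $1,\dots,n$ from the nested sequence $G[v_i,\dots,v_n]$ --- gives the ``thus dp'' clause.

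There is essentially no obstacle here: the theorem is a corollary harvesting work done in the preceding results, and the only thing to be careful about is the direction of the two implications (weakly $4$-simplicial versus $4$-simplicial, and the exact sense in which Lemma~\ref{lem:w4simp} is applied at each stage of the ordering rather than just to $G$ itself). If I wanted to be fully self-contained I would spell out the one-line induction showing that a weakly $4$-simplicial ordering gives a chain $G = G_0 \ge G_1 \ge \cdots \ge G_{n-1}$ of isometric subgraphs, where $G_i = G[v_{i+1},\dots,v_n]$ and each step uses Lemma~\ref{lem:w4simp} together with transitivity of the isometric relation; but since Proposition~\ref{prop:sdp} already packages this, I would keep the proof as short as the authors do.

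\begin{proof}
Applying Theorem~\ref{thm:kri} with $k=4$  shows that for any~$4$-chordal graph there is a $4$-simplicial ordering of the vertices. Moreover, Proposition~\ref{prop:sdp} implies this ordering is an sdp ordering.
\end{proof}
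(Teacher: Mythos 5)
Your proof is correct and follows exactly the same route as the paper: invoke Theorem~\ref{thm:kri} with $k=4$ to get a $4$-simplicial ordering, note it is in particular weakly $4$-simplicial, and apply Proposition~\ref{prop:sdp}. The extra remarks about the direction of the implications and the nested chain of isometric subgraphs are sound but already packaged in the cited results.
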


\begin{figure}\begin{center}
    \begin{tikzpicture}	[thick, scale=0.7]

        \draw[line width=.3pt] (0,-0.6) to (0,.8) ;
        \draw[line width=.3pt] (0,-0.6) to  (4.5,.3);
        \draw[line width=.3pt] (0,-0.6) to  (-2,-1.5);
        \draw[line width=.3pt] (0,.8) to  (-2,-1.5);
        \draw[line width=.3pt] (-2,-1.5) to  (-2,1.5);
        \draw[line width=.3pt] (-2,-1.5) to  (2,-2.5);
        \draw[line width=.3pt] (-2,1.5) to  (2,2.5);
        \draw[line width=.3pt] (2, 2.5) to  (0,.8);
        \draw[line width=.3pt] (2, 2.5) to  (4.5,.3);
        \draw[line width=.3pt] (2, -2.5) to  (4.5,.3);

	%outer
    	\shade[shading=ball, ball color=black] (0,-0.6) circle (.1/0.7);
	    \shade[shading=ball, ball color=black] (0,.8) circle (.1/0.7);
	    \shade[shading=ball, ball color=black] (2,2.5) circle (.1/0.7);
	
		\shade[shading=ball, ball color=black] (2,-2.5) circle (.1/0.7);
    	\shade[shading=ball, ball color=black] (-2,1.5) circle (.1/0.7);
	
	    \shade[shading=ball, ball color=black] (-2,-1.5) circle (.1/0.7);
	    \shade[shading=ball, ball color=black] (4.5,.3) circle (.1/0.7);
	    
	%labels
		\node[label=left:{$7$}] (1) at (0,-0.6){};
	    \node[label=left:{$6$}] (2) at (0,.8){};
	    \node[label=right:{$3$}] (3) at (2,2.5){};
	
		\node[label=right:{$1$}] (4) at (2,-2.5){};
    	\node[label=left:{$4$}] (5) at (-2,1.5){};
	
	    \node[label=left:{$5$}] (6) at (-2,-1.5){};
	    \node[label=right:{$2$}] (7) at (4.5,.3){};
	
	\end{tikzpicture}
\end{center}
\caption{A non-$4$-chordal graph that is sdp. The vertex labels give an sdp ordering.}
\label{fig:C5sdp}
\end{figure}
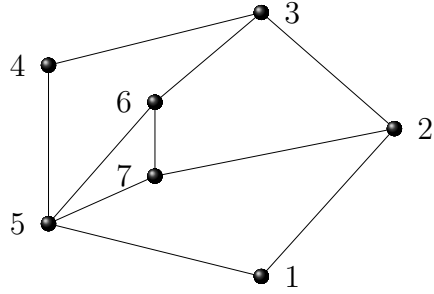

The graph in Figure~\ref{fig:C5sdp} is not $4$-chordal, because it contains an induced $5$-cycle, so by Theorem \ref{thm:kri} the graph cannot have a $4$-simplicial ordering. However, the ordering given by the vertex labels is a weakly $4$-simplicial ordering, so the graph is sdp. To see the ordering is not~$4$-simplicial, note that the vertex labelled $1$ is not~$4$-simplicial because the path $2-3-4-5$ violates the $4$-simplicial condition. Theorem~\ref{thm:sdp} implies that a graph that is dp but not sdp cannot contain an induced $4$-cycle, combining this with \cite[Corollary~3.2]{Zah15} gives the following corollary:

\begin{cor}
Any dp graph that is not sdp must contain an induced cycle of length $k\ge 5$.
\end{cor}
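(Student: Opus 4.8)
The plan is to obtain this as a short corollary of Theorem~\ref{thm:sdp} combined with the earlier result \cite[Corollary~3.2]{Zah15}. First I would argue by contraposition: assume $G$ is dp but not sdp, and show $G$ contains an induced $C_k$ with $k\ge 5$. The result \cite[Corollary~3.2]{Zah15} already tells us that a dp graph which is not sdp cannot be chordal, i.e.\ it must contain an induced cycle of length at least $4$; so the whole task reduces to ruling out the single remaining case where the longest induced cycle of $G$ has length exactly~$4$.

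Second, I would dispatch that case using Theorem~\ref{thm:sdp}. If the longest induced cycle of $G$ has length $4$, then by definition $G$ is $4$-chordal, and Theorem~\ref{thm:sdp} says $G$ is sdp --- contradicting the hypothesis. Hence the longest induced cycle of $G$ has length $k\ge 5$, and in particular $G$ contains an induced cycle of length at least~$5$, which is exactly the assertion. Phrased positively, this says: a $4$-chordal graph (no induced cycle of length $\ge 5$) that is dp is automatically sdp, since Theorem~\ref{thm:sdp} covers the length-$4$ case and \cite{Zah15} covers the chordal case.

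There is no genuine obstacle here --- the statement is labelled a corollary precisely because the work has all been done in Theorem~\ref{thm:sdp} and the cited \cite{Zah15} result. The only point deserving a sentence of care is the case split: one must make sure that \emph{every} graph whose largest induced cycle has length at most $4$ is handled, which is why \cite[Corollary~3.2]{Zah15} is invoked to absorb the chordal/acyclic subcase rather than relying on Theorem~\ref{thm:sdp} alone.
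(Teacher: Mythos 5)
Your proposal is correct and follows exactly the paper's (one-sentence) argument: Theorem~\ref{thm:sdp} rules out the case where the largest induced cycle has length $4$, and \cite[Corollary~3.2]{Zah15} absorbs the chordal/acyclic case, so a dp-but-not-sdp graph must have an induced cycle of length at least $5$. Your explicit attention to the case split is a reasonable expansion of what the paper leaves implicit, but it is not a different route.
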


%%%%%%%%%%%%%%%%%%%%%%%%%%%%%%%%%%%%%%%%%%%%%%%%%%%%%%%%%%%%%%%%%%%%%%%%%%%%%%%%%%%%%%%%%%%%

\section{Separable graphs}\label{Separable graphs}
A connected graph is said to  be {\em separable} if it can be disconnected by removing a vertex, which we call a \emph{cut vertex}.
In this section we consider the distance preserving property in separable graphs. A separable graph can be represented in the following way:

\begin{defn}
Consider two non-trivial graphs $G$ and $H$, with $E(G)\not=\emptyset$ and~$E(H)\not=\emptyset$, with a single common vertex~$x$. Let $G+_x H$ be the union of~$G$ and $H$.
\end{defn}

%The graph $G+_x H$ is a separable graph with cut vertex $x$.
So $G +_x H$ is a separable graph with a cut vertex $x$. We characterise the isometric subgraphs of $G +_x H$. To do this we introduce the following lemma.

\begin{lem}\label{H'+G'} 
Consider a graph $G +_x H$ and two induced subgraphs $ H' \subseteq H $, $ G' \subseteq G $, with~$x\in V(G')\cap V(H')$, then:
$$
G'+_x H' \le G +_x H\ \text{   if and only if } \ H'\le H \ \text{and}\ G'\le G.
$$
\end{lem}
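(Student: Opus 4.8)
The plan is to prove both directions by analyzing the structure of geodesics in a graph with a cut vertex. The key structural fact is this: since $x$ is a cut vertex separating $V(G)\setminus\{x\}$ from $V(H)\setminus\{x\}$, any path between a vertex $a\in V(G)$ and a vertex $b\in V(H)$ must pass through $x$. Consequently $d_{G+_x H}(a,b)=d_G(a,x)+d_H(x,b)$ for such a mixed pair, and for a pair of vertices both lying in $G$ (respectively both in $H$) every geodesic stays inside $G$ (respectively $H$), so $d_{G+_x H}(a,b)=d_G(a,b)$. I would state and briefly justify this as a preliminary observation, and record the analogous statement for $G'+_x H'$, whose cut vertex is again $x$.

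For the backward direction, assume $H'\le H$ and $G'\le G$. Take any two vertices $a,b\in V(G'+_x H')$. I split into cases. If $a,b$ both lie in $G'$, then $d_{G'+_x H'}(a,b)=d_{G'}(a,b)=d_G(a,b)=d_{G+_x H}(a,b)$, using the preliminary observation on both ambient graphs and the hypothesis $G'\le G$; the case $a,b\in H'$ is symmetric. If $a\in V(G')$ and $b\in V(H')$, then using the additivity across the cut vertex in both $G'+_x H'$ and $G+_x H$, together with $d_{G'}(a,x)=d_G(a,x)$ and $d_{H'}(x,b)=d_H(x,b)$, we get $d_{G'+_x H'}(a,b)=d_{G'}(a,x)+d_{H'}(x,b)=d_G(a,x)+d_H(x,b)=d_{G+_x H}(a,b)$. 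Hence $G'+_x H'\le G+_x H$.

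For the forward direction, assume $G'+_x H'\le G+_x H$ and show $G'\le G$ (the argument for $H'\le H$ is identical by symmetry). Take $a,b\in V(G')$. Since $G'\subseteq G$ is an induced subgraph, $d_{G'}(a,b)\ge d_G(a,b)$ automatically, so it suffices to prove $d_{G'}(a,b)\le d_G(a,b)$. By the preliminary observation, $d_{G'+_x H'}(a,b)=d_{G'}(a,b)$ and $d_{G+_x H}(a,b)=d_G(a,b)$, and the isometry hypothesis gives $d_{G'+_x H'}(a,b)=d_{G+_x H}(a,b)$, so $d_{G'}(a,b)=d_G(a,b)$ as required. Thus $G'\le G$.

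The only real obstacle is making the preliminary observation airtight, in particular the claim that a geodesic between two vertices of $G$ inside $G+_x H$ never leaves $G$: if such a geodesic entered $H$ it would have to leave through $x$ as well (since $x$ is the unique cut vertex), and the portion between the two visits to $x$ could be deleted to shorten the path, contradicting minimality — so in fact some geodesic, hence the distance, is realized within $G$. I would state this carefully once and then reuse it; everything else is the routine case split above.
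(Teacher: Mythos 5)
Your proposal is correct and follows essentially the same route as the paper: both arguments rest on the observation that, since $x$ is a cut vertex, geodesics between vertices on the same side stay on that side and geodesics between mixed pairs decompose additively through $x$, followed by the same chains of equalities in each direction. Your preliminary observation is stated a bit more explicitly than in the paper's proof, but the substance is identical.
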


\begin{proof}
First we consider the forward direction. Since $x$ is a cut vertex any geodesic path between a pair of vertices of $H\subseteq G +_x H$ is contained in~$H$, thus~$H\leq  G +_x H$. The same is true when replacing $G$ and $H$ by $G'$ and $H'$, respectively. Combining this with our assumption we have:
$$ d_{H'}(u,v) = d_{G' +_x H'} (u,v)=d_{G +_x H}(u,v) = d_H(u,v),$$
for every pair of vertices $u,v\in V(H')$, so $H'\le H$. An analogous argument shows that~$G'\le~G$.

Now consider the backward direction.
Using the fact $H\leq  G +_x H$ and the assumption $H'\le H$, we have
\begin{equation}\label{1a}
d_{G' +_x H'}(u,v)=d_{H'}(u,v)=d_{H}(u,v)=d_{G +_xH}(u,v),
\end{equation} 
for every pair $(u,v)\in V(H')\times V(H')$. An analogous argument shows that
\begin{equation}\label{1b}d_{G'+_xH'}(a,b)=d_{G+_xH}(a,b),
\end{equation} for every pair $(a,b)\in V(G')\times V(G')$. Next consider a pair $(a,u)\in V(G')\times V(H')$. Any geodesic path 
from $a$ to $u$ can be considered as the concatenation of an $a$--$x$ geodesic path in $G'$ and a $x$--$u$ geodesic path in $H'$. Applying Equations \eqref{1a} and \eqref{1b} implies that:
\begin{align*}
d_{G'+_x H'}(a,u)& =  d_{G' +_x H'}(a,x)+d_{G' +_x H'}(x,u)\\ &
= d_{G'}(a,x)+d_{H'}(x,u)\\ &
= d_{G}(a,x)+d_{H}(x,u) \\ &
= d_{G +_x H}(a,u).
\end{align*}
This completes the proof.
\end{proof}

To state the main result of this section, we use the following definition and notation.
\begin{defn}\label{def: DP}
For a graph $G$ and a vertex $x\in V(G)$, let
\begin{align*}
&\DP(G) = \{A \subseteq V(G): G[A] \le G\},\\
&\ddp(G) = \{|A| : A \in \DP(G)\},\\
&\ddp_x(G) = \{|A| : A \in \DP(G)\,\,\&\,\,x \not\in A\},\\
&\ddp^x(G) = \{|A| : A \in \DP(G)\,\,\&\,\,x \in A\}.\end{align*}
For sets $R$, $S$ and $T$ of integers, define $R+S+T := \{r+s+t : r \in R,\,\,s \in S\,\,\&\,\,t\in T\}$.
\end{defn}

\begin{thm}\label{dp(G+_xH)}
Consider a graph $G +_x H$. Then:
\begin{equation*}\label{eq:dp}
\ddp(G +_x H) \,\,\,=  \,\,\,\big(\ddp^x(G)+\ddp^x(H)+\{-1\}\big) \,\,\,\cup  \,\,\,\ddp_x(G) \,\,\,\cup\,\,\, \ddp_x(H). 
\end{equation*}
\end{thm}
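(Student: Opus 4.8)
The plan is to classify the isometric subgraphs of $G+_xH$ by whether or not they contain the cut vertex $x$: the ones through $x$ will account for the set $\ddp^x(G)+\ddp^x(H)+\{-1\}$, and the ones avoiding $x$ for $\ddp_x(G)\cup\ddp_x(H)$. Throughout I would reuse the observation from the proof of Lemma~\ref{H'+G'}: since $x$ is a cut vertex, no geodesic of $G+_xH$ between two vertices of $G$ can pass through $H$ (it would have to revisit $x$), so $d_{G+_xH}(a,b)=d_G(a,b)$ for all $a,b\in V(G)$, and symmetrically for $H$.

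For the case $x\in A$, given $A\in\DP(G+_xH)$ I would set $G'=(G+_xH)[A\cap V(G)]=G[A\cap V(G)]$ and $H'=H[A\cap V(H)]$; then $(G+_xH)[A]=G'+_xH'$ and $|A|=|G'|+|H'|-1$ by inclusion–exclusion, as $G'$ and $H'$ meet only in $x$. The forward direction of Lemma~\ref{H'+G'} gives $G'\le G$ and $H'\le H$, hence $|G'|\in\ddp^x(G)$ and $|H'|\in\ddp^x(H)$, so $|A|$ lies in the first set. Conversely, any element of that set is $|G'|+|H'|-1$ for some $G'\le G$, $H'\le H$ with $x$ in both, and the backward direction of Lemma~\ref{H'+G'} shows $G'+_xH'\le G+_xH$, realising this order. (Sanity check: a single vertex is isometric, so $1\in\ddp^x(G)\cap\ddp^x(H)$, whence the first set already contains $\ddp^x(G)\cup\ddp^x(H)$, and $G'=G$, $H'=H$ recovers $|G+_xH|=|G|+|H|-1$.)

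For the case $x\notin A$, I would first note that a nonempty $A\in\DP(G+_xH)$ induces a connected subgraph, since $d_{G[A]}(a,b)=d_{G+_xH}(a,b)<\infty$ for all $a,b\in A$. A connected subgraph of $(G+_xH)-\{x\}$ lies in a single component of that graph, and each such component is contained in $V(G)\setminus\{x\}$ or in $V(H)\setminus\{x\}$ because $x$ is the only vertex shared by $G$ and $H$; thus $A\subseteq V(G)\setminus\{x\}$ or $A\subseteq V(H)\setminus\{x\}$. Assuming the former, the equality $d_{G+_xH}(a,b)=d_G(a,b)$ for $a,b\in V(G)$ noted above (together with $(G+_xH)[A]=G[A]$) gives $G[A]\le G+_xH$ if and only if $G[A]\le G$, i.e.\ $A\in\DP_x(G)$, so $|A|\in\ddp_x(G)$; the same equivalence yields the reverse inclusion, that every $A\in\DP_x(G)$ has $G[A]\le G+_xH$, and symmetrically for $\DP_x(H)$. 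Combining the two cases proves the identity.

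I expect the main obstacle to be this second case, and specifically its two small structural reductions: that an isometric subgraph of the connected graph $G+_xH$ must itself be connected and hence confined to one side of the cut vertex, and that isometry in $G+_xH$ agrees with isometry in $G$ (resp.\ $H$) on vertex sets lying on that side. A point worth making explicit, though harmless, is the convention on the empty set: if it is admitted to $\DP$, it contributes $0$ to both $\ddp_x(G)$ and $\ddp(G+_xH)$ while leaving the first set (whose minimum is $1$) unchanged, so the stated equality holds regardless of that convention.
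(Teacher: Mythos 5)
Your proposal is correct and follows essentially the same route as the paper: split on whether $x\in A$, apply Lemma~\ref{H'+G'} in both directions for the case $x\in A$, and observe that an isometric (hence connected) subgraph avoiding $x$ must lie entirely in $G$ or in $H$. You simply fill in more of the details (the connectivity reduction and the inclusion--exclusion count) than the paper's proof records explicitly.
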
 
\begin{proof}
We consider two cases based upon whether $A\in \DP(G +_x H)$ contains~$x$. If $A$ does not contain $x$, then $A$ is fully contained in either $G$ or $H$, so $\ddp_x(G +_x H)=\ddp_x(G)\cup \ddp_x(H)$. If $A$ does contain $x$, then Lemma \ref{H'+G'} implies $A=G' +_x H'$, where $G'\le G$, $H'\le H$ and both contain $x$. Therefore, $\ddp^x(G +_x H)=\ddp^{x}(G)+\ddp^{x}(H)+\{-1\}$ where the minus~$1$ accounts for the common vertex $x$ in $G'$ and $H'$. Combining these two cases with the formula~$\ddp(G +_x H)=\ddp^x(G +_x H)\cup \ddp_x(G +_x H)$ completes the proof.
\end{proof}

\begin{figure}
  \begin{center}
    \begin{tikzpicture}	

    \shade[ball color=white] (4,0) circle (1.5cm);
    \shade[ball color=white] (-4,0) circle (1.5cm);
    \shade[ball color=black] (-4.2,.6) circle (.06cm);
    \shade[ball color=black] (3.5,.6) circle (.06cm);

    \node [right] at (-4.7,.6) {$x$};
    \node [right] at (3.5,.6) {$y$};
    \node [below] at (-4,.1) {$G$};
    \node [below] at (4,.1) {$H$};
    \node [below] at (-0.35,2.6) {$P(x,y)$};

    \draw (-4.2,.6) .. controls +(120:2cm) and +(80:2cm) .. (3.5,.6);
	\end{tikzpicture}
  \end{center}
 \caption[]{The figure for $G_x {\overset{r}{\text{---}}} H_y$.}\label{fig:figure2}
\end{figure}
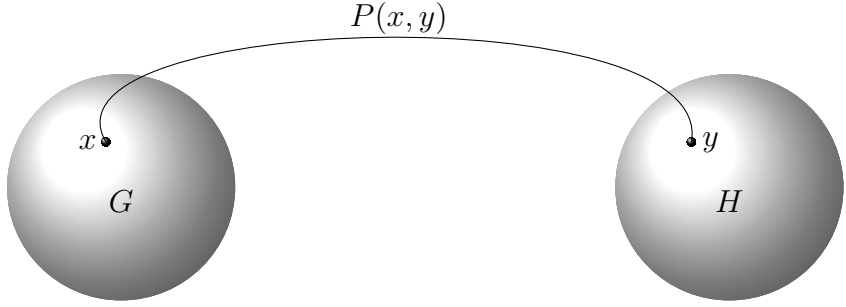

Given two disjoint graphs we can connect the graphs by a path of length~$r$, for any $r>0$.
\begin{defn}
Consider two disjoint graphs $G$ and $H$. Let $G_x {\overset{r}{\text{---}}} H_y$ be the graph obtained by connecting $x \in V(G)$ and $y \in V(H)$ with a path $P(x,y)$ of length $r$.
\end{defn}

%These graphs are less interesting than separable graphs.  And they are separable so that their dp-sets can be calculated by a simple iteration  of the previous result.
These graphs are separable, so applying a simple iteration of Theorem~\ref{dp(G+_xH)} gives the following corollary: 

\begin{cor}\label{dp(G+H)}
Consider two disjoint graphs $ G $ and $ H $. If $r>0$ then:

\begin{align*} 
\ddp(G_x {\overset{r}{\text{---}}} H_y)\ =\ & \big(\ddp^x(G)+\ddp^y(H) + \{-1,\dots ,r-1\}\big) \\
&\cup \ddp_x(G)\cup \ddp_y(H).
\end{align*}
\end{cor}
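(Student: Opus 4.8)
The plan is to split $G_x {\overset{r}{\text{---}}} H_y$ at a cut vertex and invoke Theorem~\ref{dp(G+_xH)}, the real content being an analysis of how attaching a pendant path changes the $\ddp$-sets. Write the connecting path $P(x,y)$ as $x = p_0, p_1, \dots, p_r = y$, so $\{p_1,\dots,p_{r-1}\}$ is exactly the set of vertices of $G_x {\overset{r}{\text{---}}} H_y$ lying in neither $G$ nor $H$, and let $\widehat{G}$ be the graph obtained from $G$ by attaching the pendant path $p_0 = x, p_1, \dots, p_r$ (with $p_r$ a new leaf, so the only vertex of $V(G)$ on the path is $x=p_0$). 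Then $G_x {\overset{r}{\text{---}}} H_y = \widehat{G} +_y H$ after identifying $p_r$ with $y$, so Theorem~\ref{dp(G+_xH)} yields
\[
\ddp(G_x {\overset{r}{\text{---}}} H_y) = \bigl(\ddp^{y}(\widehat{G}) + \ddp^{y}(H) + \{-1\}\bigr) \,\cup\, \ddp_{y}(\widehat{G}) \,\cup\, \ddp_{y}(H),
\]
and it remains to compute $\ddp^{y}(\widehat{G})$ and $\ddp_{y}(\widehat{G})$.

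For this, note that every isometric subgraph of $\widehat{G}$ is connected (as $\widehat{G}$ is), and that the path $p_0,\dots,p_r$ is a dead end off $G$ attached only at $x$. Hence an isometric subgraph $S$ of $\widehat{G}$ is of one of three types: $S$ lies inside the path, and is then any contiguous subpath; $S$ lies inside $V(G)$, and is then an isometric subgraph of $\widehat{G}$ if and only if its vertex set belongs to $\DP(G)$ (using, as in the proof of Lemma~\ref{H'+G'}, that $G$ is isometric in $\widehat{G}$); or $S$ meets both $V(G)$ and $\{p_1,\dots,p_r\}$, in which case connectivity forces $x\in S$, forces $S\cap\{p_1,\dots,p_r\}$ to be an initial segment $p_1,\dots,p_k$, forces $S\cap V(G)$ to belong to $\DP^x(G)$, and, if moreover $p_r=y\in S$, forces $k=r$. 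Recording which of these contain $y$ gives
\[
\ddp^{y}(\widehat{G}) = \{1,\dots,r\}\cup\bigl(\ddp^{x}(G)+\{r\}\bigr),\qquad
\ddp_{y}(\widehat{G}) = \{0,\dots,r-1\}\cup\ddp(G)\cup\bigl(\ddp^{x}(G)+\{1,\dots,r-1\}\bigr).
\]

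Substituting these into the displayed identity and expanding the sumsets, the result should collapse, via the elementary facts $\ddp(G)=\ddp^{x}(G)\cup\ddp_{x}(G)$, $\ddp(H)=\ddp^{y}(H)\cup\ddp_{y}(H)$, and $1\in\ddp^{x}(G)\cap\ddp^{y}(H)$ (witnessed by the singletons $\{x\}$ and $\{y\}$), to the term $\ddp^{x}(G)+\ddp^{y}(H)+\{-1,\dots,r-1\}$ together with the leftover pieces $\ddp_x(G)$ and $\ddp_y(H)$. I expect this final step — the set arithmetic that reassembles $\ddp^{y}(\widehat{G})$, $\ddp_{y}(\widehat{G})$, $\ddp^{y}(H)$ and $\ddp_{y}(H)$ into the claimed closed form — to be the only genuine obstacle, and in carrying it out I would be most careful about the extreme values of the index set $\{-1,\dots,r-1\}$ and about the degenerate case $r=1$, where $P(x,y)$ has empty interior and the construction simply adds the edge $xy$; the rest is a routine unwinding of Theorem~\ref{dp(G+_xH)} and of the cut-vertex structure of $G_x {\overset{r}{\text{---}}} H_y$.
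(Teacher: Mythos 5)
Your decomposition $G_x {\overset{r}{\text{---}}} H_y = \widehat{G} +_y H$, with $\widehat{G}$ the graph $G$ plus a pendant path, and your computation of $\ddp^{y}(\widehat{G})$ and $\ddp_{y}(\widehat{G})$ are essentially the ``simple iteration of Theorem~\ref{dp(G+_xH)}'' that the paper's one-line proof invokes, and they are correct up to one slip: $0\notin\ddp_{y}(\widehat{G})$, since the empty set is not an isometric subgraph (the pure-path pieces avoiding $y$ have sizes $1,\dots,r-1$, and these are in any case absorbed by your other two terms). The genuine problem is the step you defer to the end. Substituting your formulas into the displayed identity and simplifying using $1\in\ddp^{x}(G)\cap\ddp^{y}(H)$ yields
\[
\bigl(\ddp^{x}(G)+\{0,\dots,r-1\}\bigr)\cup\bigl(\ddp^{y}(H)+\{0,\dots,r-1\}\bigr)\cup\bigl(\ddp^{x}(G)+\ddp^{y}(H)+\{r-1\}\bigr)\cup\ddp_x(G)\cup\ddp_y(H),
\]
and this does \emph{not} collapse to the stated right-hand side. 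The stated set additionally contains every value $g+h+j$ with $g\in\ddp^{x}(G)$, $h\in\ddp^{y}(H)$ and $-1\le j<r-1$, and when both $g,h\ge 2$ such a value need not be realised: any isometric subgraph meeting both $V(G)$ and $V(H)$ must contain the entire connecting path, hence has size of the form $g'+h'+(r-1)$.

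This is not merely a presentational gap: the two sets really differ. Take $G=H=C_5$ and $r=2$, so that $\ddp^{x}(G)=\ddp^{y}(H)=\{1,2,3,5\}$. The stated formula contains $5+5+0=10$, but the $11$-vertex graph $G_x{\overset{2}{\text{---}}}H_y$ has no isometric subgraph on $10$ vertices: deleting the interior path vertex or $x$ or $y$ disconnects the graph, and deleting any other vertex leaves a $P_4$ inside one of the $5$-cycles whose endpoints are now at distance $3$ instead of $2$. So the set arithmetic that you flag as ``the only genuine obstacle'' is indeed an obstacle, and no manipulation will overcome it; carried out honestly, your intermediate computations establish the displayed union above (which is the correct value of $\ddp(G_x{\overset{r}{\text{---}}}H_y)$, and is in general a proper subset of the corollary's right-hand side), so only the inclusion $\subseteq$ of the corollary follows from your argument.
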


%%%%%%%%%%%%%%%%%%%%%%%%%%%%%%%%%%%%%%%%%%%%%%%%%%%%%%%%%%%%%%%%%%%%%%%%%%%%%%%%%%%%%%%%%%%%

\section{Maintaining the non-dp property}\label{$C_{k,l}$}
In this section we investigate the class of non-dp graphs. It is conjectured in~\cite{NF2} that almost all graphs are dp. So understanding this class is a logical step towards a full classification of the class of dp graphs. The simplest non-dp graphs are the cycle graphs~$C_k$, for all $k\ge 5$. We investigate how we can add vertices to the cycle graphs and preserve the non-dp property. To this end we introduce the following class of graphs.

Consider the cycle $ C_{k} $ and a set of vertices $A$, with $|A|=\ell$, such that $ A\cap V(C_{k})=\emptyset $. For each $a\in A$, select three consecutive vertices of $ C_{k} $ and  join~$a$ to at least one of the three selected vertices. Let $\mathcal{C}_{k,\ell}$ denote the family of graphs that can be constructed in this way. Given a graph $G\in \mathcal{C}_{k,\ell}$, let~$C(G)$ be the original cycle vertices of $G$ and $A(G)$ the added vertices. Note that the addition of the vertices to the cycle graph cannot change the distance between any pair of vertices in~$C(G)$, so~$C_k\le~G$.

Recall that we label the vertices of $C_k$ as $v_1,\ldots,v_k$, and let $v_{k+1}:=v_1$ and $v_0:=v_k$. So there is an edge between two vertices $v_i$ and $v_j$ if and only if $i=j\pm1$.

\begin{thm}\label{Emi}
If $k>2(\ell+2)$, then any graph in $\mathcal{C}_{k,\ell}$ is non-dp. 
\begin{proof}
Consider a graph $G \in \mathcal{C}_{k,\ell}$. If an added vertex $a$ is connected to two cycle vertices~$c_{i-1}$ and $c_{i+1}$, then the removal of either $a$ or $c_i$ results in isomorphic subgraphs. Therefore, when constructing an isometric subgraph of~$G$, by removing a set of vertices of $G$,  we can assume that~$a$ is always removed before $c_i$. Also recall that the added vertices do not alter the distance between any of the cycle vertices. Combining these two points implies that given a graph~$H\le G$ there is a geodesic path in $H$ between any two elements of $C(H)$ that is entirely contained in $H[C(H)]$. Therefore, if $H\le C_{k,\ell}$, then~$H[C(H)]\le C_k$. 

We show that there is no isometric subgraph of $G$ with order $\lfloor\frac{k}{2}\rfloor+2$. Suppose for a contradiction that such a subgraph does exist, we denote it~$H$. We know that $\ell<\frac{k}{2}-2$, so to obtain $H$ we must remove a set of $s$ cycle vertices, where $\lceil\frac{k}{2}\rceil-2>s>0$. However, this implies that $C(H)$ has~$t$ vertices, where $k>t>\lfloor\frac{k}{2}\rfloor+2$, and it is straightforward to see that there is no isometric subgraph of $C_k$ with $t$ vertices. Therefore, $H$ is not isometric, so~$G$ is non-dp.
\end{proof}
\end{thm}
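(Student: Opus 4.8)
The plan is to fix a graph $G \in \mathcal{C}_{k,\ell}$ and exhibit a single order for which $G$ has no isometric subgraph, namely the order $\lfloor k/2\rfloor + 2$. The first step is to reduce statements about isometric subgraphs of $G$ to statements about isometric subgraphs of the underlying cycle $C_k$. I would argue that if an added vertex $a \in A(G)$ is attached to the two outer vertices $c_{i-1}, c_{i+1}$ of its chosen triple, then $a$ and $c_i$ play interchangeable roles, so without loss of generality we may assume any vertex-deletion sequence removes $a$ before $c_i$; more importantly, since the added vertices never shorten distances between cycle vertices (this is noted already in the paragraph defining $\mathcal{C}_{k,\ell}$, giving $C_k \le G$), any geodesic between two vertices of $C(H)$ in an induced subgraph $H$ can be taken to avoid the added vertices. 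From this I would conclude that whenever $H \le G$, the restriction $H[C(H)]$ is an isometric subgraph of $C_k$.

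The second step is the combinatorial core: understanding which orders occur as isometric subgraphs of a cycle $C_k$. The key fact I would use (or prove in a line) is that an induced subgraph of $C_k$ obtained by deleting $s$ vertices with $0 < s$ consists of disjoint paths, and such a union of paths is isometric in $C_k$ if and only if no gap is too large — concretely, a proper nonempty induced subgraph of $C_k$ is isometric precisely when it has at most $\lceil k/2\rceil$ vertices (deleting a single vertex already breaks a cycle into a path, which is isometric, but deleting fewer than $\lceil k/2\rceil - 2$ vertices leaves more than $\lfloor k/2\rfloor + 2$ vertices on a structure that still "wraps around" enough to fail isometry). I would phrase this as: there is no isometric subgraph of $C_k$ on $t$ vertices for any $t$ with $\lfloor k/2 \rfloor + 2 < t < k$.

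The third step assembles these. Suppose for contradiction that $H \le G$ with $|H| = \lfloor k/2\rfloor + 2$. Since $|A(G)| = \ell < k/2 - 2$ (this is exactly the hypothesis $k > 2(\ell+2)$), $H$ cannot be obtained by deleting only added vertices; we must delete $s$ cycle vertices with $0 < s < \lceil k/2\rceil - 2$. Then $|C(H)| = k - s =: t$ satisfies $\lfloor k/2\rfloor + 2 < t < k$, and by Step 1, $H[C(H)] \le C_k$, contradicting Step 2. Hence no such $H$ exists and $G$ is non-dp.

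\textbf{Expected main obstacle.} The delicate point is Step 1 — justifying that we may assume added vertices are deleted "early" and that geodesics between cycle vertices avoid $A(G)$ — since an added vertex attached to the \emph{middle} vertex $c_i$ of its triple, or to just one endpoint, is not symmetric with any cycle vertex and cannot simply be swapped out; one has to argue instead that such a vertex, having degree behavior that cannot create a shortcut across the cycle (its neighbours lie within a window of three consecutive cycle vertices), is never needed on a geodesic between two cycle vertices, so deleting it does not disturb isometry of the cycle part. Getting this reduction stated cleanly, uniformly over the three attachment patterns, is where care is needed; the rest is the routine fact about isometric subgraphs of $C_k$ together with the arithmetic forced by $k > 2(\ell+2)$.
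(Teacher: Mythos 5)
Your proposal reproduces the paper's own argument essentially step for step, but the part you dismiss as ``routine arithmetic'' in Step 3 contains a genuine gap (one that is present in the paper's proof as well). You assert that to reach an induced subgraph $H$ of order $\lfloor k/2\rfloor+2$ one must delete $s$ cycle vertices with $0<s<\lceil k/2\rceil-2$. The opposite bound is what actually holds: $|G|=k+\ell$, so $\lceil k/2\rceil+\ell-2$ vertices are deleted in total, of which at most $\ell$ are added vertices, whence $s\ge\lceil k/2\rceil-2$. The case you (and the paper) fail to address is $s>\lceil k/2\rceil-2$, i.e.\ $H$ retains $j\ge1$ added vertices; then $|C(H)|=\lfloor k/2\rfloor+2-j\le\lfloor k/2\rfloor+1$, and an arc of $C_k$ of that order \emph{is} isometric in $C_k$, so Step 2 yields no contradiction. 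Indeed the intermediate claim is simply false: let $G$ be $C_{10}$ with one pendant vertex $a$ attached to $v_1$, a member of $\mathcal{C}_{10,1}$ with $10>2(1+2)$. Then $G[\{a,v_1,\dots,v_6\}]$ is a path realising all $G$-distances among its vertices, hence an isometric subgraph of order $7=\lfloor 10/2\rfloor+2$. So the order you target does not witness failure of the dp property.

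The theorem can be rescued by targeting the order $m=\lfloor k/2\rfloor+\ell+2$ instead. Any $H\le G$ with $|H|=m$ satisfies $|C(H)|\ge m-\ell=\lfloor k/2\rfloor+2>\lfloor k/2\rfloor+1$ and $|C(H)|\le m<k$, where the last inequality is exactly what the hypothesis $k>2(\ell+2)$ provides; your Step 1 reduction then forces $H[C(H)]$ to be a proper isometric subgraph of $C_k$ of order exceeding $\lfloor k/2\rfloor+1$, which does not exist. (In the pendant example this is order $8$, and one checks directly that no isometric subgraph of order $8$ exists.) Your Step 1, which you correctly flag as the delicate point, is essentially sound once the rerouting is argued for each attachment pattern; the fatal problem in the proposal is the count, not the reduction.
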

 
 \begin{figure}\begin{center}\begin{tikzpicture}
    	\draw[line width=.3pt] (.6,1.9) to  (-.6,1.9);
    	\draw[line width=.3pt] (.6,-1.9) to  (-.6,-1.9);
    	\draw[line width=.3pt] (.6,1.9) to  (2.8,2.8);
    	\draw[line width=.3pt] (2.8,2.8) to  (-.6,1.9);
    	\draw[line width=.3pt] (2.8,2.8) to  (1.6,1.2);
    	\draw[line width=.3pt] (.6,1.9) to  (1.6,1.2);
    	\draw[line width=.3pt] (2,.05) to  (1.6,1.2);
    	\draw[line width=.3pt] (2,.05) to  (1.6,-1.2);
    	\draw[line width=.3pt] (2,.05) to  (2.8,-2.8);
    	\draw[line width=.3pt] (2.8,-2.8) to  (.6,-1.9);
    	\draw[line width=.3pt] (2.8,-2.8) to  (1.6,-1.2);
    	\draw[line width=.3pt] (.6,-1.9) to  (1.6,-1.2);
    	\draw[line width=.3pt] (-2.8,-2.8) to  (-.6,-1.9);
    	\draw[line width=.3pt] (-2.8,-2.8) to  (-1.6,-1.2);
    	\draw[line width=.3pt] (1.6,-1.2) to  (.6,-1.9);
    	\draw[line width=.3pt] (-1.6,-1.2) to  (-.6,-1.9);
    	\draw[line width=.3pt] (-1.6,-1.2) to (-2,.05);
    	\draw[line width=.3pt] (-2.8,-2.8) to  (-2,.05);
    		\draw[line width=.3pt] (-1.6,1.2) to  (-2,.05);
    		\draw[line width=.3pt] (-1.6,1.2) to  (-.6,1.9);
    		
    		\node [below] at (0,.3) {$G$};

    	\shade[shading=ball, ball color=black] (.6,1.9) circle (.1);
    	\shade[shading=ball, ball color=black] (-.6,1.9) circle (.1);
    	\shade[shading=ball, ball color=black] (.6,-1.9) circle (.1);
    	\shade[shading=ball, ball color=black] (-.6,-1.9) circle (.1);

    	\shade[shading=ball, ball color=black] (2,.05) circle (.1);

    	\shade[shading=ball, ball color=black] (1.6,1.2) circle (.1);
    	\shade[shading=ball, ball color=black] (1.6,-1.2) circle (.1);

    	\shade[shading=ball, ball color=black] (-1.6,1.2) circle (.1);
    	\shade[shading=ball, ball color=black] (-1.6,-1.2) circle (.1);

    	\shade[shading=ball, ball color=black] (-2,.05) circle (.1);

    	\shade[shading=ball, ball color=black] (2.8,2.8) circle (.1);

    	\shade[shading=ball, ball color=black] (2.8,-2.8) circle (.1);
    	\shade[shading=ball, ball color=black] (-2.8,-2.8) circle (.1);

		\end{tikzpicture}\end{center}
    \caption{A counterexample to the converse of the Theorem~\ref{Emi}}\label{fig:figure1001}
\end{figure}
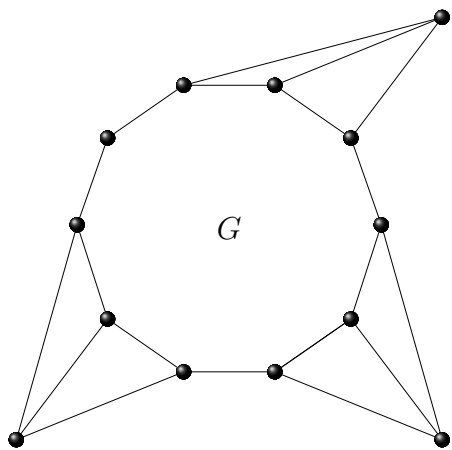

Note that the converse of Theorem~\ref{Emi} is not true. For example in Figure~\ref{fig:figure1001}, 
the graph~$G$ is not dp, since there is no isometric subgraph of order 9, but $k=10\not>10=2(\ell+2)$.

By Theorem~\ref{thm:sdp} we know that all non-dp graphs must contain a cycle of length $n\ge5$. We propose that the above approach can be extend to give a 
 constructive algorithm to build all non-dp graphs from a collection of cycles joined together by sequentially adding new vertices. 
 However, how to achieve this construction we leave as an open problem. 

\mbox{}
%An interesting question, which we leave open, is: %can we add vertices to non-cycle non-dp graphs in a similar way to get further results on the class of non-dp graphs.
%
%\begin{open}
%How can we add vertices to non-cycle non-dp graphs to get further results on the class of non-dp graphs?
%\end{open}

%We finish with an interesting conjecture about distance preserving graphs. Nussbaum and Esfahanian~\cite{NF2} have shown that if the minimum degree satisfies $\delta(G)\ge \frac{2n}{3}-1$, then $G$ is dp. 
% However, it is not clear whether or not this bound is tight.
% The largest known value of $\delta(G)$ in a non-dp graph is $\frac{n-1}{2}$. One example of such a graph is the $5$ cycle $C_5$, where $\delta(C_5)=2=\frac{n-1}{2}$ and $C_5$ is not dp.
% This leads us to the following conjecture, which has been verified for all graphs with $n\le10$ vertices using SageMath \cite{sagemath}.
%
%\begin{conj}\label{conj:degree}
%If  $G$ is an $n$-vertex graph with minimum degree  $\delta(G) \ge \frac{n}{2}$, then $G$ is dp. 
%\end{conj} 
%If $\delta(G) \ge \lfloor \frac{n}{2} \rfloor$ then $G$ has diameter at most~$2$. Since the possible distances in such a graph are so limited, one might be able to find the required isometric subgraphs.

%\bibliography{references}
%\bibliographystyle{abbrv}
\end{document}